\def\v #1{\vert #1\vert}             %Para denotar elgrado de #1
\def\m #1 #2{(-1)^{{\v #1} {\v #2}}} %Para denotar el signo (-1)^...
\theoremstyle{plain}
\newtheorem{theorem}{Theorem}
\theoremstyle{definition}
\newtheorem{definition}[theorem]{Definition}
\def\<#1>{\langle#1\rangle}
\begin{document}

% Use the \preprint command to place your local institutional report
% number in the upper righthand corner of the title page in preprint mode.
% Multiple \preprint commands are allowed.
% Use the 'preprintnumbers' class option to override journal defaults
% to display numbers if necessary
%\preprint{}
\centerline{\Large \bf A geometric Hamilton--Jacobi theory}\vskip 0.25cm
\centerline{\Large \bf for a Nambu--Poisson structure}

\medskip
\medskip

\centerline{M. de Le\'on and C.
Sard\'on}
%\medskip
\vskip 0.5cm
\centerline{Instituto de Ciencias Matem\'aticas, Campus Cantoblanco}\vskip 0.2cm
\centerline{Consejo Superior de Investigaciones Cient\'ificas}
%\medskip
\vskip 0.2cm
\centerline{C/ Nicol\'as Cabrera, 13--15, 28049, Madrid. SPAIN}
%\medskip

\begin{abstract}

% \[
% \xymatrix{ E
% \ar[dd]^{\pi} \ar[rrr]^{X_H}&   & &TE\ar[dd]^{T\pi}\\
%   &  & &\\
%  N\ar@/^2pc/[uu]^{dW}\ar[rrr]^{X_H^{dW}}&  & & TN}
% \]

The Hamilton--Jacobi theory is a formulation of Classical Mechanics equivalent to other formulations as Newton's equations, Lagrangian or Hamiltonian Mechanics.
It is particulary useful for the identification of conserved quantities of a mechanical system.
The primordial observation of a geometric Hamilton--Jacobi equation is that if a Hamiltonian vector field $X_{H}$ can be projected into the configuration manifold by means of a 1-form $dW$, then the integral curves of the projected
vector field $X_{H}^{dW}$can be transformed into integral curves of $X_{H}$ provided that $W$ is a solution of the Hamilton--Jacobi equation.
This interpretation has been applied to multiple settings: in nonhonolomic, singular Lagrangian Mechanics and classical field theories.
Our aim is to apply the geometric Hamilton--Jacobi theory to systems endowed with a Nambu--Poisson structure.
The Nambu--Poisson structure has shown its interest in the study physical systems described by several Hamiltonian functions.
In this way, we will apply our theory to two interesting examples in the Physics literature: the third-order Kummer--Schwarz equations and
a system of $n$ copies of a first-order differential Riccati equation. From these examples, we retrieve the original Nambu bracket in three dimensions and
a generalization of the Nambu bracket to $n$ dimensions, respectively.

\end{abstract}

\section{Introduction}

\noindent
The Hamilton--Jacobi equation (HJ equation) is a formulation of Classical Mechanics equivalent to other formulations as Newton's equations, Lagrangian or Hamiltonian Mechanics.
It is particulary useful for the identification of conserved quantities of a mechanical system \cite{Gold,Kibble,LL}.
% This theory is based on a principal idea: to find a symplectomorphism transforming a nontrivial initial Hamiltonian system
% into an equilibrium configuration in which the final Hamiltonian $\widetilde{H}=0$. For this, we consider the following.
% 
 Let us take the triple $(T^{*}Q,\omega, H)$, where $T^{*}Q$ is our $2n$-dimensional phase space, $\omega$ is a non-degenerate $(0,2)$-skew symmetric tensor
(the canonical symplectic form) and $h$ plays the role of a Hamilton function on $M$. We denote by $(\overline{T^{*}Q}=T^{*}\mathbb{R}\times T^{*}Q,  \overline{\omega}=\omega-dE\wedge dt, \overline{H}=H-E)$ the time-dependent phase space extension of our former set $(T^{*}Q,\omega, H)$. We can choose canonical pairs of coordinates $(q^i,p_j)$ with $i,j=1,\dots,n$ on $T^{*}Q$. 

The standard formulation of the Hamilton--Jacobi theory (HJ theory) consists on finding a function $S(t,q^i)$, called {\it principal function}
such that
\begin{equation}\label{tdepHJ}
 \frac{\partial S}{\partial t}+h\left(q^i,\frac{\partial S}{\partial q^i}\right)=0
\end{equation}
where $h=h(q^i,p_i)$ is the Hamiltonian function of the system. It is possible to absorb the temporal dependency within $S$, with a separation of variables
which directly identifies constants of motion 
\begin{equation}
 S=W(q^1,\dots,q^n)-Et,
\end{equation}
where $E$ is the total energy of the system.
% \begin{equation}
%  H\left(q,\frac{\partial S}{\partial q}(q,\widetilde{q},t)\right)+\frac{\partial S}{\partial t}(q,\widetilde{q},t)=0,
% \end{equation}
This choice gives rise to the {\it characteristic equation} \cite{Marsden,Gold}
\begin{equation}\label{HJeq1}
 h\left({q}^i,\frac{\partial W}{\partial {q}^i}\right)=E.
\end{equation}

Equations \eqref{tdepHJ} and \eqref{HJeq1} are the so-called $t$-dependent and time independent {\it Hamilton Jacobi equations}, respectively.
The Hamilton--Jacobi equation are a useful intrument to solve the Hamilton equations for $h$
 \begin{equation}\label{hamileq}
 \left\{\begin{aligned}
 {\dot q}^i&=\frac{\partial h}{\partial p_i},\\
 {\dot p}_j&=-\frac{\partial h}{\partial q^j}.
 \end{aligned}\right.
 \end{equation}
Indeed, if we find a solution $W$ of \eqref{HJeq1}, then any solution of \eqref{hamileq} gives a solution of the Hamilton
equations by taking ${p}_i=\partial W/\partial {q}^i.$

The Hamilton--Jacobi theory (HJ theory) has been in the limelight of research during this decade \cite{CGMMMLRR1,CGMMMLRR,LeonIglDiego,MarrSosa,Pauffer,Rund}. In particular, the HJ theory has been widely studied from a geometric point of view.
A lot of geometric results for solving the HJ equation were obtained in \cite{CGMMMLRR1,CGMMMLRR,LeonDiegoMarrSalVil,LeonIglDiego,LeonMarrDiego08,LeonMarrDiegoVaq,MarrSosa}. The primordial observation
for a HJ theory is that if a Hamiltonian
vector field $X_{H}$ can be projected into the configuration manifold by means of a 1-form $dW$, then the integral curves of the projected
vector field $X_{H}^{dW}$can be transformed into integral curves of $X_{H}$ provided that $W$ is a solution of \eqref{HJeq1}.

\[
\xymatrix{ E
\ar[dd]^{\pi} \ar[rrr]^{X_H}&   & &TE\ar[dd]^{T\pi}\\
  &  & &\\
 N\ar@/^2pc/[uu]^{dW}\ar[rrr]^{X_H^{dW}}&  & & TN}
\]

\bigskip

Here, $E=T^{*}Q$ for instance. This kind of diagram
has been applied to multiple theories, as nonhonolomic \cite{CGMMMLRR1,CGMMMLRR,LeonIglDiego,LeonMarrDiego}, singular Lagrangian Mechanics \cite{LeonDiegoVaq2,LeonMarrDiegoVaq} and classical field theories \cite{LeonMarrDiego08}
in different geometric settings \cite{CGMMMLRR1,CGMMMLRR,LeonIglDiego,MarrSosa,Pauffer,Rund}.
% For example, the case of $n$-dimensional brackets was formerly discussed by Tulczyjew \cite{Tulczy} who characterized locally Hamiltonian vector fields on a sympletic manifold
%  $(N,\omega)$ as a Lagrangian submanifold of the sympletic manifold $(TM,\omega^{C})$ where $TM$ is the tangent bundle of $M$ and $\omega^{C}$ is
%  the complete or tangent lift of $\omega$ to $TM$. This result was later generalized to Poisson manifolds \cite{GrabUrb} and Jacobi manifolds \cite{IbaLeonMarrDiego1}.
% In \cite{IbaLeonMarrDiego}, generalized Nambu--Poisson structures (NP structures) were studied. In the case of $\Lambda$ of degree $3$ or higher, the spanned distribution
% is completely integrable and defines a foliation in which the leaves are endowed with a volume form or are points. 
% In this manuscript, we have extended results retrieved for 
% generalized almost Poisson manifolds, now endowed with a Leibnitz derivation rule. The obtained bracket is a generalization of the Jacobi identity. Then, Nambu--Poisson
% brackets are a generalization of Jacobi brackets.
% Our aim is to apply the geometric HJ theory to systems endowed with a NP structure. In similar fashion as performed in publications \cite{IbaLeonMarrDiego1,IbaLeonMarrDiego,LeonDiegoVaq3}, we derive an expression for the HJ equation 
% adapted to the NP structure with the aid of previous structural theorems,
% to later focus on the particular case of volume NP manifolds. Eventually, our results are generalized in the case
% of a general NP manifold.  

In particular, the interpretation of the dynamics as lagrangian submanifolds was introduced by W.W. Tulczyjew \cite{Tulczy} who characterized local Hamiltonian
vector fields on a sympletic manifold $(M,\omega)$ as a lagrangian submanifold of the sympletic manifold $(TM, \omega^C)$ where $TM$ is the tangent
bundle of $M$ and $\omega^C$ is the complete tangent lift of $\omega$ to $TM$.
This result was later generalized to Poisson manifolds \cite{GrabUrb} and Jacobi manifolds \cite{IbaLeonMarrDiego1}. In \cite{IbaLeonMarrDiego} it was extended to Nambu--Poisson structures.
Let us recall that Nambu--Poisson structures arose to deal with Hamiltonian systems equipped with two or more Hamiltonian functions. This kind
of system was introduced by Y. Nambu \cite{Nambu}. He considered a bracket of order 3 
\begin{equation*}
 [A,B,C]=\frac{\partial (A,B,C)}{\partial (x,y,z)}
\end{equation*}
with the canonical variables satisfying $[x,y,z]=1$, that could be interpreted as a bracket defined by the canonical volume form in $\mathbb{R}^3$. This bracket
attracted a lot of scientific attention at that time.
An extension to manifolds has been developed by L. Takhtajan in \cite{Takh}. Here, the geometric structure is provided by a contravariant
tensor field $\Lambda$ of order $n$.

The NP structure has shown its interest to study physical systems described in terms of several, compatible Hamiltonian descriptions.
These systems are reputable in the literature of nonlinear phenomena, as it is in Fluid Dynamics, given the recurring existence of an infinite
number of compatible Hamiltonians with a unique system, that are derivable by means of a recursion operator, and that consequently induce multiple conserved quantities \cite{GGKM,Zabusky}.

The plan of the paper is the following:
First, we review the fundamental geometric concepts for (almost) NP structure and fundamental operations. 
We propose a structural theorem, in which the only Lagrangian submanifolds of a NP structure of order $n$ are of dimension $n-1$. This fact motivates the application
of the HJ theory to reduce our system on a general manifold of dimension $n$, to a system defined on an $n-1$-dimensional manifold in which its solution can be retrieved in an easier manner.
By relying on the Darboux integrability theorem elucidated for the global NP structure in \cite{IbaLeonMarrDiego}.
 We derive an expression for the Hamilton--Jacobi equation adapted to $n$-dimensional volume NP structures.
Afterwards, we illustrate the proposed theory by a list of examples. In particular, we study the third-order Kummer--Schwarz equation, which retrieves
the former Nambu structure proposed by Nambu for dimension 3, then we generalize to an example in which an $n$-dimensional bracket is present. This is the case
of $n$-coupled first-order Riccati differential equations. For both $3$ and $n$ dimensional examples, we derive their associated volume form.
To finish, the HJ theory is discussed for the proposed examples.

Along the paper, we consider all the manifolds to be $C^{\infty}$ and we will denote by $\mathfrak{X}(M)$ the Lie algebra of vector fields defined
on $M$, $C^{\infty}(M)$ is the Lie algebra of $C^{\infty}$ functions and $\Lambda^k(T^{*}M)$ are the differential forms of order $k$ on M.

\section{Nambu--Poisson manifolds}

Let us consider an almost Nambu--Poisson manifold, i.e., the pair $(E,\Lambda)$ where $E$ is a differentiable manifold of dimension $m$ equipped with a $(n,0)$-skew symmetric contravariant tensor $\Lambda$ $(m\geqslant n)$. 
The tensor $\Lambda$ defines the vector bundle morphism $\sharp:\Lambda^{n-1}(T^{*}E)\rightarrow TE$ by $<\sharp(\alpha),\beta>=\Lambda(\alpha,\beta)$,
where $\alpha\in \Lambda^{n-1}(T^{*}E)$ and $\beta \in T^{*}E$.

The bracket induced by $\Lambda$ on $C^{\infty}(E)$ is defined as
\begin{equation}
\{f_1,\dots,f_n\}=\Lambda(df_1,\dots,df_n),\qquad f_1,\dots,f_n\in C^{\infty}(E)
\end{equation}
This bracket has the following properties
\begin{enumerate}
 \item $\{f_1,\dots,f_n\}=(-1)^{\epsilon(\sigma)}\{f_{\sigma(1)},\dots,f_{\sigma(n)}\},$
with $\sigma \in \text{Symm}(n)$ and $\epsilon(\sigma)$ is the parity of the permutation;
\item $\{f_1g_1,\dots,f_n\}=f_1\{g_1,\dots,f_n\}+g_1\{f_1,\dots,f_n\},$
\end{enumerate}
which are the skew-symmetry and Leibnitz rule, correspondingly.
%We denote by $\mathcal{X}(E)$ the $C^{\infty}$-module of vector fields on $E$

We call $\mathcal{D}=\sharp \Lambda^{n-1}(T^{*}E)$ the characteristic distribution induced by $\Lambda$ 
where $\mathcal{D}_x=\sharp \Lambda^{n-1}(T^{*}_xE)$ for every point $x\in E$. Notice that
this is a generalized distribution, in the sense that $r^X:x\in E\mapsto \text{dim} \mathcal{D}_x$ is not necessarily constant on $E$.
We refer to the annihilator of the distribution $\mathcal{D}$ by $\mathcal{D}^{o}=\text{ker}(\sharp)$.

%Here we introduce the following necessary definitions.

% Alternatively, we can consider a Nambu--Poisson $\Lambda$-tensor definition given by
% \begin{equation}\label{defaptensor}
%  \Lambda_{x}(df_1(x),\dots,df_n(x))=\{f_1,\dots,f_n\}(x),\quad x\in E
% \end{equation}
% which defines the abovementioned bracket satisfying (i) and (ii). 
In the case $n=2$ for $\Lambda$ tensors, we retrieve the ordinary almost-Poisson tensors with the integrability condition of the Schouten-Nijinheus bracket $[\Lambda,\Lambda]=0$.
But in general, for an arbitrary order of $\Lambda$, the integrability condition reads $[\Lambda,\Lambda]=(-1)^{n^2}[\Lambda,\Lambda]$, 
which is trivially satisfied for an odd $n\geqslant 3$. In this way, we are in need of another characterization of integrability for manifolds equipped with $(n,0)$-skew symmetric tensors. 
We introduce a generalization of the Jacobi identity, discussed by Takhtajan for $n$-dimensional brackets and dynamics \cite{Takh}.

Consider now an almost Nambu--Poisson manifold $(E,\Lambda)$ with $m\geq n\geq 3$.
To have dynamics, we are provided with $C^{\infty}(E)$ hamiltonian functions $f_1,\dots,f_{n-1}:E\rightarrow \mathbb{R}$ whose corresponding vector field is  
the Hamiltonian vector field
$$X_{f_1,\dots,f_{n-1}}=\sharp(df_1\wedge\dots\wedge df_{n-1}).$$
When all these vector fields are derivations of the algebra $C^{\infty}(E)\times \dots \times C^{\infty}(E)$,
that is, the following identity, known as {\it fundamental identity} introduced by Takhtajan \cite{Takh}, holds
\begin{equation}\label{fundidentity}
 X_{f_1,\dots,f_{n-1}}\{g_1,\dots,g_n\}=\sum_{i=1}^n \{g_1,\dots,X_{f_1,\dots,f_{n-1}}g_i,\dots,g_n\}
\end{equation}
for all functions $f_1,\dots,f_{n-1},g_1,\dots,g_n\in C^{\infty}(E)$ on $E$, then, $(E,\Lambda)$ is called a {\it Nambu--Poisson manifold}
and $\Lambda$ is a {\it Nambu--Poisson tensor}.

We now introduce a theorem depicted in \cite{IbaLeonMarrDiego} which is key to the understanding of the forthcoming sections.

\begin{theorem}
 Let $(E,\Lambda)$ be a generalized $m$-dimensional almost Poisson manifold of order $n\geqslant 3$. 
\begin{itemize}
 \item If $\Lambda$ is a generalized Poisson tensor, the distribution $\mathcal{D}$ is not involutive in general.
\item If $\Lambda$ is a Nambu--Poisson tensor, then the distribution $\mathcal{D}$ is completely integrable and defines a foliation on $E$ such that
when $\Lambda$ is restricted to leaves of the foliation, there exist induced Nambu--Poisson structures in each leaf. The leaves are of two kinds,
for a point $x\in E$, if $\Lambda(x)\neq 0$, then the leave passing through $x$ has dimension $n$ and the induced Nambu--Poisson structure derives
from a volume form. In this way, we can choose local coordinates $\{x^1,\dots,x^n,x^{n+1},\dots,x^m\}$ in which the tensor can be locally
written as
\begin{equation}
 \Lambda=\frac{\partial}{\partial x^1}\wedge \dots \wedge \frac{\partial}{\partial x^n}.
\end{equation}
Associated with this tensor, there exists a volume form which can be locally written as
\begin{equation}
 \Omega=dx^1\wedge \dots \wedge dx^n.
\end{equation}
\noindent
If $\Lambda=0$, then the leaf reduces to a point $x$ and the induced Nambu--Poisson structure is trivial.
\end{itemize}

\end{theorem}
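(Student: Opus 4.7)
The plan is to handle the two bullets separately, concentrating effort on the second. For the first bullet, it suffices to exhibit a generalized Poisson tensor (a skew $n$-vector $\Lambda$ with $[\Lambda,\Lambda]=0$, a condition which the text notes is automatic for odd $n \geq 3$) whose characteristic distribution $\mathcal{D}$ fails to be involutive. Any $\Lambda$ that is not pointwise decomposable will typically suffice: one computes two Hamiltonian vector fields $X_{f_1,\ldots,f_{n-1}}$ and $X_{g_1,\ldots,g_{n-1}}$ whose Lie bracket has a component outside $\sharp\,\Lambda^{n-1}(T^*E)$. A sum of two independent simple $n$-vectors on $\mathbb{R}^{2n}$ is a standard source of such counterexamples, and one verifies the failure of involutivity by a direct coordinate computation.

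For the second bullet I would proceed in three stages. Stage one is involutivity. Given $X = X_{f_1,\ldots,f_{n-1}}$ and $Y = X_{g_1,\ldots,g_{n-1}}$, applying the fundamental identity \eqref{fundidentity} to an arbitrary test function $h$ collapses to
$$[X,Y] \;=\; \sum_{i=1}^{n-1} X_{g_1,\ldots,\,Xg_i,\ldots,g_{n-1}},$$
which is again a sum of Hamiltonian vector fields; since $\mathcal{D}$ is locally spanned by such vector fields, it is involutive, and the generalized Frobenius theorem of Stefan--Sussmann yields the (singular) foliation. Stage two is the key algebraic input: at every point $x$ with $\Lambda(x) \neq 0$, the tensor $\Lambda(x)$ is a decomposable $n$-vector, i.e., equals $v_1 \wedge \cdots \wedge v_n$ for some $v_i \in T_x E$. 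I would establish this by translating the fundamental identity into quadratic Plücker-type relations on the components of $\Lambda$ in a local basis, and then invoking the classical characterization of decomposable alternating tensors as the zero set of those relations. This pointwise decomposability fixes the rank of $\mathcal{D}_x$ to be exactly $n$ on the open set $\{\Lambda \neq 0\}$, and zero elsewhere.

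Stage three is to produce the Darboux coordinates asserted in the theorem. Fixing $x$ with $\Lambda(x) = v_1 \wedge \cdots \wedge v_n \neq 0$, I would pick a function $x^1$ with $dx^1(v_1) \neq 0$ on a neighbourhood and then iteratively choose $x^2,\ldots,x^n$ so that $\{x^1,\ldots,x^n\}$ is nowhere zero; rescaling one coordinate arranges $\{x^1,\ldots,x^n\} = 1$. Completing to a chart $(x^1,\ldots,x^n,x^{n+1},\ldots,x^m)$ adapted to the foliation and expanding $\Lambda$ in this chart, the fundamental identity forces all components other than $\partial_{x^1}\wedge\cdots\wedge\partial_{x^n}$ to vanish, so $\Lambda = \partial_{x^1}\wedge\cdots\wedge\partial_{x^n}$ on the leaf and $\Omega = dx^1\wedge\cdots\wedge dx^n$ is the induced volume form. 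At points where $\Lambda$ vanishes the leaf trivially reduces to $\{x\}$. The main obstacle is stage two, the pointwise decomposability of $\Lambda$: it is purely algebraic but delicate, and it is precisely this statement that distinguishes Nambu--Poisson geometry from ordinary Poisson geometry ($n=2$), where the analogous splitting theorem produces symplectic rather than volume leaves.
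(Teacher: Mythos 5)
The paper does not actually supply a proof of this theorem: its ``proof'' is a one-line deferral to the reference \cite{IbaLeonMarrDiego}, so the only meaningful comparison is against the standard argument given there. Your three-stage architecture (involutivity from the fundamental identity, pointwise decomposability of $\Lambda$, then Darboux coordinates) is indeed the standard one, and your bracket computation $[X,Y]=\sum_{i=1}^{n-1}X_{g_1,\dots,Xg_i,\dots,g_{n-1}}$ is correct. But there is a genuine gap at your Stage two, and you have put your finger on it yourself: the pointwise decomposability of a Nambu--Poisson tensor of order $n\geq 3$ \emph{is} the theorem. It was an open conjecture of Takhtajan, settled independently by several groups, and its proof is not a routine translation of the fundamental identity into Pl\"ucker relations. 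What one actually extracts from the fundamental identity, by evaluating it on coordinate functions, is a family of differential--algebraic identities mixing the components $\Lambda^{i_1\dots i_n}$ and their first derivatives; isolating from these the purely algebraic quadratic relations that characterize decomposable $n$-vectors requires a genuinely delicate combination of several such identities (and is exactly where the hypothesis $n\geq 3$ enters, since for $n=2$ the conclusion is false). Writing ``establish this by translating the FI into Pl\"ucker-type relations and invoking the classical characterization'' names the target but does not reach it.

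Two further points. First, your counterexample for the first bullet does not work as stated: a sum of two independent simple $n$-vectors on $\mathbb{R}^{2n}$ with constant (or generically independent) summands has $\sharp(\Lambda^{n-1}T^{*}E)=TE$, which is trivially involutive. A valid counterexample must have non-constant rank, i.e.\ variable coefficients arranged so that $\mathcal{D}$ is a proper, bracket-unstable subspace at some point (for instance a coefficient vanishing along a hypersurface). Second, a minor ordering issue: for a singular distribution, bare involutivity does not give integrability via Stefan--Sussmann; you either need to observe that $\mathcal{D}$ is locally finitely generated (so Hermann's theorem applies), or you should establish decomposability first, conclude that $\mathrm{rank}\,\mathcal{D}=n$ on the open set $\{\Lambda\neq 0\}$, and apply the classical Frobenius theorem there, with the zero-dimensional leaves handled separately.
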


\begin{proof}
 Complete proof of this theorem can be found in one of the articles of the present authors \cite{IbaLeonMarrDiego}.
\end{proof}
This theorem is key to the following sections.
\subsection*{Lagrangian submanifolds}

Let $(E,\Lambda)$ be a Nambu--Poisson manifold with $m>n\geq 3$, we say that a submanifold $N\subset E$ is $j$-Lagrangian $\forall x\in N, 1\leqslant j\leqslant n-1$ if
\begin{equation}\label{lagsubm}
 \sharp \text{Ann}^j(T_{x}N)=\sharp (\Lambda^{n-1} (T_{x}^{*}E))\cap T_{x}N,
\end{equation}
where the annihilator is defined as
\begin{equation}\label{lagsubmann}
 \text{Ann}^j(T_{x}N)=\{\alpha\in \Lambda^{n-1}(T^{*}_xE)|\quad \iota_{v_1\wedge \dots\wedge v_j} \alpha=0,\forall v_1,\dots,v_j \in T_{x}N\}.
\end{equation}
The following inclusions are clearly fulfilled
\begin{equation}
 \text{Ann}^{1}(T_{x}N)\subseteq \text{Ann}^{2}(T_{x}N)\subseteq \dots \subseteq \text{Ann}^{n-1}(T_{x}N).
\end{equation}

\subsection*{Nambu--Poisson volume manifolds}

\noindent
We consider a volume manifold as a pair $(E,\Omega)$, where $\Omega$ is a volume form on the differentiable $n$ dimensional manifold $E$.
There is an associated $(n,0)$-skew symmetric tensor $\Lambda_{\Omega}$ defined as
\begin{equation}\label{need1}
\Lambda_{\Omega}(df_1 \dots df_n)=\{f_1,\dots,f_n\}
\end{equation}
where the bracket is defined by
\begin{equation}\label{need2}
 \{f_1,\dots,f_n\}\Omega=df_1\wedge \dots \wedge df_n.
\end{equation}

A particular example is $E\simeq \mathbb{R}^n$ with canonical coordinates $\{x^i,i=1,\dots,n\}$. Here, the canonical volume form is written as $\Omega_{\mathbb{R}^n}=dx^1\wedge \dots \wedge dx^n$
and the bracket reduces to the Jacobian
\begin{equation}\label{defjacobian}
 \{f_1,\dots,f_n\}=\left|\begin{array}{ccc}
                          \frac{\partial f_1}{\partial x^1}& \dots& \frac{\partial f_1}{\partial x^n}\\
                             \frac{\partial f_2}{\partial x^1}& \dots& \frac{\partial f_2}{\partial x^n}\\
                            \dots& \dots& \dots \\
                          \frac{\partial f_n}{\partial x^1}& \dots& \frac{\partial f_n}{\partial x^n}
                         \end{array}\right|
\end{equation}
which for the case $n=3$, we retrieve the original bracket introduced by Nambu \cite{Nambu2,Nambu}.

\begin{theorem}\label{th1}
 Given a volume Nambu--Poisson structure $(E,\Omega)$ of dimension $n$, every submanifold of codimension $1$
is $(n-1)$-Lagrangian. No other Lagrangian submanifolds exist.

\end{theorem}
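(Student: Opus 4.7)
First, since $(E,\Omega)$ is a volume Nambu--Poisson manifold, around any point $x\in E$ I may choose local coordinates (by the Darboux theorem for volume forms, or equivalently the normal form recalled in Theorem 1) in which $\Omega = dx^1\wedge\cdots\wedge dx^n$ and $\Lambda_{\Omega} = \partial_1\wedge\cdots\wedge\partial_n$. Writing $\widehat{dx^k} := dx^1\wedge\cdots\wedge\widehat{dx^k}\wedge\cdots\wedge dx^n$ for the standard basis of $\Lambda^{n-1}(T_x^{*}E)$, a direct application of \eqref{need1}--\eqref{need2} yields $\sharp(\widehat{dx^k}) = \pm\,\partial_k$, so that $\sharp:\Lambda^{n-1}(T_x^{*}E)\to T_xE$ is a pointwise linear isomorphism. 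In particular $\sharp(\Lambda^{n-1}(T_x^{*}E)) = T_xE$, so that the defining identity \eqref{lagsubm} of an $(n-1)$-Lagrangian submanifold $N$ through $x$ reduces to the single requirement $\sharp(\text{Ann}^{n-1}(T_xN)) = T_xN$.

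For the first assertion, take $N$ of codimension one and adapt the coordinates so that $T_xN = \text{span}\{\partial_1,\ldots,\partial_{n-1}\}$. Then $\Lambda^{n-1}(T_xN)$ is one-dimensional, generated by $\partial_1\wedge\cdots\wedge\partial_{n-1}$, and the contraction $\iota_{\partial_1\wedge\cdots\wedge\partial_{n-1}}\widehat{dx^l}$ vanishes for $l\leq n-1$ (because $\partial_l$ then meets the omitted $dx^l$-slot) and equals $\pm 1$ for $l=n$. Hence $\text{Ann}^{n-1}(T_xN) = \text{span}\{\widehat{dx^1},\ldots,\widehat{dx^{n-1}}\}$, and applying $\sharp$ recovers $\text{span}\{\partial_1,\ldots,\partial_{n-1}\} = T_xN$, verifying \eqref{lagsubm}.

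For the converse, suppose $N$ is $(n-1)$-Lagrangian and set $k:=\dim N$. Since every alternating $(n-1)$-form on $T_xN$ extends to one on $T_xE$, the restriction map $\Lambda^{n-1}(T_x^{*}E)\to\Lambda^{n-1}(T_xN)^{*}$ is surjective, giving $\dim\text{Ann}^{n-1}(T_xN) = n - \binom{k}{n-1}$. Bijectivity of $\sharp$ then turns the Lagrangian condition into the numerical equation $n - \binom{k}{n-1} = k$; examining the three ranges $k\leq n-2$, $k=n-1$, and $k=n$ yields respectively $n=k$ (impossible), $n-1=n-1$ (an identity), and $0=n$ (impossible), so $k=n-1$ is forced. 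The argument is essentially this linear-algebraic bookkeeping; the key conceptual input is that a nowhere-vanishing $n$-vector makes $\sharp$ an isomorphism, and the only delicate point is sign-tracking in the contractions.
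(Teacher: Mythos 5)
Your proof is correct, and its first half is essentially the paper's own argument: the same Darboux coordinates, the same computation $\sharp(dx^1\wedge\cdots\wedge d\check{x^{i}}\wedge\cdots\wedge dx^n)=(-1)^{n-i}\partial/\partial x^i$ showing $\sharp$ is a pointwise isomorphism, and the same explicit identification of $\text{Ann}^{n-1}(T_xN)$ with the span of the first $n-1$ basis $(n-1)$-forms when $N$ has codimension one. Where you genuinely diverge is in excluding the other dimensions. The paper disposes of $\text{codim}\,N>1$ by asserting that ``similar arguments'' give $\text{Ann}^{n-1}(T_xN)=\{0\}$, whereas your rank--nullity count $\dim\text{Ann}^{n-1}(T_xN)=n-\binom{k}{n-1}$ shows that for $k=\dim N\leq n-2$ the annihilator is in fact \emph{all} of $\Lambda^{n-1}(T_x^{*}E)$: any $n-1$ vectors in a $k$-dimensional space have vanishing wedge, so the defining contraction condition is vacuous. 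The submanifold still fails to be Lagrangian --- the annihilator is too large rather than too small, and $\sharp$ maps it onto $T_xE\neq T_xN$ --- so the theorem's conclusion is unaffected, but your bookkeeping is the version that is literally correct, and the single equation $n-\binom{k}{n-1}=k$ handles the cases $k\leq n-2$, $k=n-1$, $k=n$ uniformly. One coverage difference worth noting: the paper also computes $\text{Ann}^{j}(T_xN)=\{0\}$ for $1\leq j\leq n-2$ on the codimension-one submanifold, i.e.\ it examines the $j$-Lagrangian condition for $j<n-1$ as part of ``no other Lagrangian submanifolds exist,'' while you work exclusively with $j=n-1$ after observing that \eqref{lagsubm} reduces to $\sharp\,\text{Ann}^{n-1}(T_xN)=T_xN$. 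If the theorem's second sentence is read as quantifying over all $j$, you should add a line treating the lower annihilators; with the reading that ``Lagrangian'' means ``$(n-1)$-Lagrangian,'' your argument is complete as written.
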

\begin{proof}

Choose local coordinates $(x^1,\dots,x^n)$ such that $\Omega=dx^1\wedge \dots \wedge dx^n$. A direct computation shows that
\begin{equation*}
 \sharp \left(dx^1\wedge \dots \wedge d\check{x^{i}}\wedge \dots \wedge dx^n\right)=(-1)^{n-i}\frac{\partial}{\partial x^i}
\end{equation*}
where $d\check{x^{i}}$ stands for the omitted term $dx^{i}.$
This implies that the generalized distribution is locally spanned by $\{\frac{\partial}{\partial x^1},\dots,\frac{\partial }{\partial x^n}\}$,
or in other words,
\begin{equation*}
 \sharp (\Lambda^{n-1} T^{*}E)=TE.
\end{equation*}
Now, assume that $N$ is a $(n-1)$-dimensional submanifold of $E$ such that in local coordinates, it is defined as $\{x^n=0\}$.
Therefore, $TN$ is locally generated by $\{\frac{\partial}{\partial x^1},\dots,\frac{\partial }{\partial x^{n-1}}\}.$
Take $\alpha \in \Lambda^{n-1}T^{*}E$; in local coordinates
\begin{equation*}
 \alpha=\alpha_i \; dx^1\wedge \dots \wedge d\check{x^{i}}\wedge \dots \wedge dx^n
\end{equation*}
where as above, the ``check'' symbol over a term means that this term is omitted.
We compute 
$$\text{Ann}^j(T^{*}N)$$
for a fixed $j$, $1\leq j\leq n-2$.

Since $\alpha \in \text{Ann}^j(T^{*}N)$ if and only if $i_{v_1\wedge \dots \wedge v_j}\alpha=0,$ we deduce that
\begin{equation*}
 i_{\frac{\partial}{\partial x^{l_1}}\wedge \dots \wedge \frac{\partial}{\partial x^{l_j}}}\alpha=0
\end{equation*}
for $1\leq l_1< \dots <l_j\leq n-1$.
Then,
$$\text{Ann}^{j}T^{*}N=\{0\}.$$
Indeed, since
\begin{equation*}
 i_{\frac{\partial}{\partial x^{l_1}}\wedge \dots \wedge \frac{\partial}{\partial x^{l_j}}}\alpha^i dx^1\wedge \dots \wedge d\check{x^{i}}\wedge \dots dx^n=0, 
\end{equation*}
for all $1\leq l_1< \dots < l_j\leq n-1,$ it implies that $\alpha_i=0$ for all $i$.

Now, we compute,
$$\text{Ann}^{n-1}(T^{*}N)$$
In this case, $\alpha \in \text{Ann}^{n-1}(T^{*}N)$ if and only if
\begin{equation*}
 i_{\frac{\partial}{\partial x^{l_1}}\wedge \dots \wedge \frac{\partial}{\partial x^{l-1}}}\alpha=0,\quad \forall 1\leq l_1< \dots<l_{n-1}\leq n-1
\end{equation*}
and therefore, a direct computation shows that
\begin{equation*}
\text{Ann}^{n-1}(T^{*}N)=<\alpha_i dx^1\wedge \dots \wedge d\check{x^{i}}\wedge \dots \wedge dx^{n}>
\end{equation*}
with $1\leq i\leq n-1.$
And thus,
\begin{equation*}
 \sharp \left(\text{Ann}^{n-1} T^{*}N\right)=\langle \frac{\partial}{\partial x^1},\dots,\frac{\partial}{\partial x^{n-1}}\rangle
\end{equation*}
Consequently,
\begin{equation*}
 \sharp \left(\text{Ann}^{n-1} T^{*}N\right)=TN
\end{equation*}
and $N$ is $(n-1)$-lagrangian.
For submanifolds $N$ of codimension greater than $1$, one can use similar arguments to show that
\begin{equation*}
 \text{Ann}^{n-1} T^{*}N=\{0\}.
\end{equation*}
Therefore, they cannot be lagrangian submanifolds.

\end{proof}

%where $\omega=dx^{1}\wedge \dots \wedge dx^{n}$ is the canonical volume form on $E\simeq \mathbb{R}^n$.

% From here, we denote by $\Lambda_{\omega}$ the Nambu--Poisson tensor associated with the volume form $\omega$ on $E$,
%  such that $\Lambda_{\omega}(x) \neq 0$ in every point $x$ on $E$.

\section{Hamilton--Jacobi theory on Nambu--Poisson manifolds}

%These functions are interpreted as Hamiltonian functions compatible with a dynamical system.
\noindent
Given a Nambu--Poisson structure $(E,\Lambda)$, consider the map $\sharp: \Lambda^{n-1}(E)\rightarrow \mathfrak{X}(E)$ induced by $\Lambda$. Let us choose a set of functions $f_1,\dots,f_{n-1}$ in $C^{\infty}(E)$
and define the pairing 
$$<\sharp (df_1\wedge \dots \wedge df_{n-1}),df_n>=\Lambda(df_1,\dots,df_{n-1},df_n),$$
where $df_1,\dots,df_{n-1},df_n \in \Omega^{1}(E)$ are one-forms in $E$. 
The characteristic distribution in this case is $\mathcal{D}_x=\sharp \Lambda^{n-1}(T_{x}^{*}E)$ and the associated Hamiltonian vector field 
is defined by
$$X_{f_1,\dots,f_{n-1}}=\sharp (df_1\wedge \dots \wedge df_{n-1}).$$
In particular, we are interested in scenarios with a volume Nambu--Poisson structure $(E,\Omega)$ with $\textit{dim}\ E=n$, whose
dynamics is interpreted in terms of $(n-1)$-Hamiltonian functions $H_1,\dots,H_{n-1}\in C^{\infty}(E)$, in which the Hamilton--Jacobi theory is applicable.

Here, we assume that the $n$-dimensional manifold $E$ fibers over a manifold $N$ of dimension $n-1$, say $\pi:E\rightarrow N$ is a fibration.
Given a section $\gamma$ of $\pi$, that is, $\gamma : N \longrightarrow E$ is such that $\pi \circ \gamma = Id_N$, then $\gamma(N)$ is a submanifold of $E$ with codimension $1$.
\noindent
The vector field $X_{H_1,\dots H_{n-1}}^{\gamma}$ is then defined as 
\begin{equation}\label{defvf}
X_{H_1,\dots H_{n-1}}^{\gamma}=T\pi\circ X_{H_1,\dots H_{n-1}} \circ \gamma
\end{equation}
The following diagram summarizes the above construction
\[
\xymatrix{ (E, \Omega)
\ar[dd]^{\pi} \ar[rrr]^{X_{H_1,\dots H_{n-1}}}&   & &TE\ar[dd]^{T\pi}\\
  &  & &\\
 N\ar@/^2pc/[uu]^{\gamma}\ar[rrr]^{X_{H_1,\dots H_{n-1}}^{\gamma}}&  & & TN }
\]

\begin{theorem}
The vector fields $X_{H_1,\dots,H_{n-1}}$ and $X_{H_1,\dots,H_{n-1}}^{\gamma}$ are $\gamma$-related if and only if the following equation
is satisfied
\begin{equation}
 d(H_1\circ \gamma)\wedge \dots \wedge d(H_{n-1}\circ \gamma)=0.
\end{equation}

\end{theorem}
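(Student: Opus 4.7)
The plan is to reformulate the $\gamma$-relatedness condition as tangency of $X_{H_1,\dots,H_{n-1}}$ to the codimension-one submanifold $\gamma(N)$, and then translate tangency into the vanishing of a pullback by exploiting the defining relation of the Hamiltonian vector field on a volume Nambu--Poisson manifold. First, since $\gamma$ is a section of $\pi$, the endomorphism $T\gamma \circ T\pi$ of $T_{\gamma(p)}E$ is the projection onto $T\gamma(T_pN) = T_{\gamma(p)}\bigl(\gamma(N)\bigr)$ along $\ker T\pi$. Consequently, the identity $T\gamma \circ X^{\gamma}_{H_1,\dots,H_{n-1}} = X_{H_1,\dots,H_{n-1}} \circ \gamma$ holds if and only if $X_{H_1,\dots,H_{n-1}}(\gamma(p))$ is fixed by this projection for every $p \in N$, i.e., if and only if $X_{H_1,\dots,H_{n-1}}$ is tangent to $\gamma(N)$ at each of its points.

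Next I would use that, for a volume Nambu--Poisson manifold, the Hamiltonian vector field is characterized, up to an overall sign $(-1)^{n-1}$, by
$$\iota_{X_{H_1,\dots,H_{n-1}}}\Omega = dH_1 \wedge \dots \wedge dH_{n-1}.$$
This follows from (\ref{need1})--(\ref{need2}) together with a short calculation based on the observation that $\Omega \wedge df = 0$ for every smooth $f$, since $\Omega$ is of top degree. Pulling this identity back along $\gamma$ and commuting $\gamma^{*}$ past $d$ and $\wedge$ then yields
$$\gamma^{*}\bigl(\iota_{X_{H_1,\dots,H_{n-1}}}\Omega\bigr) = d(H_1 \circ \gamma) \wedge \dots \wedge d(H_{n-1} \circ \gamma).$$

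It remains to prove that $X := X_{H_1,\dots,H_{n-1}}$ is tangent to $\gamma(N)$ at every point if and only if the left-hand side above vanishes identically. For this, evaluate $\gamma^{*}(\iota_X \Omega)$ at $p \in N$ on a basis $(v_1,\dots,v_{n-1})$ of $T_pN$: the value is $\Omega_{\gamma(p)}\bigl(X(\gamma(p)),\, T\gamma(v_1),\, \dots,\, T\gamma(v_{n-1})\bigr)$. Since $\Omega$ is a volume form on an $n$-dimensional manifold, this scalar vanishes for every such basis precisely when $X(\gamma(p)) \in T\gamma(T_pN)$, i.e., when $X$ is tangent to $\gamma(N)$ at $\gamma(p)$. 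Chaining these three equivalences finishes the proof. The main obstacle is conceptual rather than computational: one must recognize the equivalence between tangency to a codimension-one submanifold and the vanishing of the pullback of the volume contraction. Theorem \ref{th1}, which guarantees that $\gamma(N)$ is automatically $(n-1)$-Lagrangian, provides the broader geometric justification that this codimension-one submanifold is the natural locus on which a Hamilton--Jacobi reduction should be sought.
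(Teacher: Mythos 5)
Your proof is correct, but it takes a genuinely different route from the paper's. The paper argues entirely in the adapted local coordinates of Theorem 2.1: it expands $X_{H_1,\dots,H_{n-1}}=\sharp(dH_1\wedge\dots\wedge dH_{n-1})$ into its Jacobian--minor components, writes $\gamma$ locally as $(x^1,\dots,x^{n-1},\gamma^n(x^1,\dots,x^{n-1}))$, computes $T_{\gamma}X^{\gamma}_{H_1,\dots,H_{n-1}}$ term by term, and reads off $\gamma$-relatedness as the scalar identity \eqref{echamj}, which it then identifies with the intrinsic condition \eqref{HJeq}. You instead argue coordinate-freely: $\gamma$-relatedness is equivalent to tangency of the Hamiltonian vector field to the codimension-one submanifold $\gamma(N)$ (via the idempotent $T\gamma\circ T\pi$), tangency is detected by the vanishing of $\gamma^{*}\left(\iota_{X}\Omega\right)$ because $\Omega$ is nondegenerate and of top degree, and the identity $\iota_{X}\Omega=(-1)^{n-1}\,dH_1\wedge\dots\wedge dH_{n-1}$ — which does follow from \eqref{need1}--\eqref{need2}, or from the same computation of $\sharp$ on the basis $(n-1)$-forms used in the proof of Theorem \ref{th1} — converts this into the stated wedge condition since pullback commutes with $d$ and $\wedge$. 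Your version has the merit of turning the final equivalence, which the paper simply asserts when passing from \eqref{echamj} to \eqref{HJeq}, into an actual argument, and of not depending on a choice of adapted chart; the paper's computation, on the other hand, delivers the explicit PDE \eqref{echamj} for $\gamma^n$ that is reused verbatim in the Kummer--Schwarz and Riccati examples. One small point worth a sentence in your write-up: the tangency criterion in your last step uses that $T\gamma(v_1),\dots,T\gamma(v_{n-1})$ are linearly independent, which holds because $\gamma$, being a section of $\pi$, is an immersion.
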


\begin{proof}
 According to definition  
$$X_{H_1,\dots,H_{n-1}}=\sharp (dH_1\wedge \dots \wedge dH_{n-1}),$$ 
being $\Lambda=\frac{\partial}{\partial x^1}\wedge \dots \wedge \frac{\partial}{\partial x^n},$ we have
\begin{align*}
X_{H_1,\dots,H_{n-1}}=\sharp &\left(\sum_{k=1}^{n-1}\frac{\partial (H_1,\dots, H_{n-1})}{\partial (x^1,\dots \check{x^k},\dots x^n)}dx^1\wedge \dots \wedge d\check{x^{k}}\wedge \dots \wedge dx^n\right)\\
&=\sum_{k=1}^{n-1}(-1)^{n-k}\frac{\partial (H_1,\dots, H_{n-1})}{\partial (x^1,\dots \check{x^k},\dots x^n)}\frac{\partial}{\partial x^k}+\frac{\partial (H_1,\dots, H_{n-1})}{\partial (x^1,\dots x^{n-1})}\frac{\partial}{\partial x^n}
\end{align*}
such that $ X_{H_1,\dots,H_{n-1}}^{\gamma}$ is the projection to $TN$
\begin{equation*}
 X_{H_1,\dots,H_{n-1}}^{\gamma}=\sum_{k=1}^{n-1}(-1)^{n-k}\frac{\partial (H_1,\dots, H_{n-1})}{\partial (x^1,\dots \check{x^k},\dots x^n)}\frac{\partial}{\partial x^k},\qquad k=1,\dots,n-1.
\end{equation*}
Assume that $\gamma$ has the local expression in fibered coordinates 
$$\gamma=(x^1,\dots,\gamma^{n}(x^1,\dots,x^{n-1}))$$
Therefore, we have
\begin{align*}
\text{T}_{\gamma} X_{H_1,\dots,H_{n-1}}^{\gamma}=\sum_{k=1}^{n-1}\frac{\partial (H_1,\dots,H_{n-1})}{\partial (x^1, \dots,\check{x^k},\dots, x^n)}\text{T}_{\gamma}\left(\frac{\partial}{\partial x^k}\right) \\
=\sum_{k=1}^{n-1}\frac{\partial (H_1,\dots,H_{n-1})}{\partial (x^1, \dots,\check{x^k},\dots, x^n)} \left(\frac{\partial}{\partial x^k}+\frac{\partial \gamma^n}{\partial x^k}\frac{\partial}{\partial x^n}\right)
\end{align*}
Then $X_{H_1,\dots,H_{n-1}}$ and $X_{H_1,\dots,H_{n-1}}^{\gamma}$ are $\gamma$-related if and only if
\begin{equation}\label{echamj}
 \sum_{k= 1}^{n}\frac{\partial (H_1,\dots,H_{n-1})}{\partial (x^1, \dots,\check{x^k},\dots, x^n)}\frac{\partial \gamma^n}{\partial x^k}=0,
\end{equation}
which is equivalent to
\begin{equation}\label{HJeq}
 d(H_1\circ \gamma)\wedge \dots \wedge d(H_{n-1}\circ \gamma)=0.
\end{equation}
\end{proof}
The last expression \eqref{HJeq} receives the name of {\it Hamilton--Jacobi equation on a volume Nambu--Poisson manifold}. We say that $\gamma$ is a {\it solution of
the Hamilton--Jacobi problem on a volume Nambu--Poisson manifold} for $(E,\Lambda,\Omega,H_1,\dots,H_{n-1})$ of degree and dimension $n$.

Next, we consider a general Nambu--Poisson manifold $(E,\Lambda)$ of order $n$ and dimension $m$, such that $\pi:E\rightarrow N$ is a fibration
over an $n$-dimensional manifold $N$. Take $\gamma:N\rightarrow E$ a section of $\pi$ such that $\pi\circ \gamma=\text{Id}_N$ and  $\gamma(N)$
is a Lagrangian submanifold of $(E,\Lambda)$. An additional hypothesis is that $\gamma(N)$ has a clean intersection with the leaves of the characteristic
foliation $\mathcal{D}$ defined by $\Lambda$. We recall that this implies that for each leaf $L\in \mathcal{D}$,
\begin{enumerate}
 \item $\gamma(N)\cap L$ is a submanifold
\item $\text{T}(\gamma(N)\cap L)=\text{T}\gamma(N)\cap \text{T}N$
\end{enumerate}
If we are assuming that $\gamma(N)$ is a $j$-Lagrangian submanifold of $(E,\Lambda)$, then $\gamma(N)\cap L$ is a $j$-Lagrangian submanifold
of $L$ with the restricted Nambu--Poisson structure, that according to Theorem 2.1, is a volume structure.
Consequently, $j=n-1$ and $N$ has dimension $n-1$.
Now, let $H_1,\dots,H_{n-1}$ be Hamiltonian functions in $E$ and $X_{H_1 \dots H_{n-1}}$ the corresponding Hamiltonian vector field. 
We define the vector field on $N$,
\begin{equation*}
 X_{H_1\dots H_{n-1}}^{\gamma}=\text{T}_{\pi}\circ X_{H_1 \dots H_{n-1}}\circ \gamma.
\end{equation*}
Since every Hamiltonian vector field is tangent to the characteristic foliation, one can conclude that
\begin{theorem}
 The vector fields $X_{H_1\dots H_{n-1}}^{\gamma}$ and $X_{H_1 \dots H_{n-1}}$ are $\gamma$-related if and only if
\begin{equation}\label{HJeq3}
 d(H_1\circ \gamma)\wedge \dots \wedge d(H_{n-1}\circ \gamma)=0
\end{equation}

\end{theorem}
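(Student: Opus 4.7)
The plan is to reduce the general Nambu--Poisson case to the volume case treated in the preceding theorem, leaf by leaf along the characteristic foliation. The key observation is that $X_{H_1,\dots,H_{n-1}}=\sharp(dH_1\wedge\cdots\wedge dH_{n-1})$ takes values in the characteristic distribution $\mathcal{D}$ by definition of $\sharp$, so it is pointwise tangent to every leaf $L$ of the foliation associated with $\Lambda$.

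First, I would rewrite $\gamma$-relatedness in a more convenient form. By the very definition $X^{\gamma}_{H_1,\dots,H_{n-1}}=T\pi\circ X_{H_1,\dots,H_{n-1}}\circ\gamma$, and since $T\gamma$ is a right inverse to $T\pi$ along $\gamma(N)$, the identity $T\gamma\circ X^{\gamma}=X_{H_1,\dots,H_{n-1}}\circ\gamma$ holds if and only if $X_{H_1,\dots,H_{n-1}}(\gamma(x))\in T_{\gamma(x)}\gamma(N)$ for every $x\in N$. The problem thus becomes: when is $X\circ\gamma$ tangent to the Lagrangian section?

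Next, fix $x\in N$ and let $L$ be the leaf of $\mathcal{D}$ through $\gamma(x)$. The tangency of $X$ to $L$ gives $X(\gamma(x))\in T_{\gamma(x)}L$, and by the clean intersection hypothesis
\[
T_{\gamma(x)}(L\cap\gamma(N))=T_{\gamma(x)}L\cap T_{\gamma(x)}\gamma(N),
\]
the condition $X(\gamma(x))\in T_{\gamma(x)}\gamma(N)$ is equivalent to $X(\gamma(x))\in T_{\gamma(x)}(L\cap\gamma(N))$. By the dimensional analysis already carried out in the paragraph preceding the theorem, $\gamma(N)\cap L$ is an $(n-1)$-Lagrangian, hence codimension-one, submanifold of the $n$-dimensional leaf $L$, and Theorem 2.1 endows $L$ with a local volume form $\Omega_L$ giving $\Lambda|_{L}=\Lambda_{\Omega_L}$. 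The restricted fibration $\pi|_L$ together with the restricted section $\gamma$ thus place us exactly in the hypotheses of the preceding theorem for volume Nambu--Poisson manifolds. Applying it leaf by leaf yields the equivalence of the tangency with the vanishing of $d(H_1|_L\circ\gamma)\wedge\cdots\wedge d(H_{n-1}|_L\circ\gamma)$ on the corresponding base.

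The remaining step, and the main obstacle, is the gluing: one must check that, under the identification of $L\cap\gamma(N)$ with an open piece of $N$ via $\gamma$, the leaf-wise differential $d(H_i|_L\circ\gamma)$ agrees with the pullback $d(H_i\circ\gamma)$ on $N$. Since $d(H_i\circ\gamma)$ on $N$ depends only on the pullback to $\gamma(N)$, and since the bases covered as $L$ varies through $\gamma(x)$ exhaust a neighborhood of $x$, the leaf-wise identities assemble into the global equation $d(H_1\circ\gamma)\wedge\cdots\wedge d(H_{n-1}\circ\gamma)=0$. The verification of this gluing is a chart computation using the Darboux coordinates for $\Lambda$ on $L$ supplied by Theorem 2.1, combined once more with the clean intersection.
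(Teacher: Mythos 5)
Your proposal is correct and takes essentially the same route as the paper: the paper's entire proof is the one-line observation that Hamiltonian vector fields are tangent to the characteristic foliation, so the statement reduces leaf by leaf to the volume-form theorem already established, which is exactly your reduction (you have merely filled in the details the paper omits). The only remark worth adding is that your final ``gluing'' step is less of an obstacle than you fear: since $\gamma(N)\cap L$ is $(n-1)$-dimensional inside the $(n-1)$-dimensional $\gamma(N)$, the section lands locally in a single leaf, so the leafwise condition coincides with the stated equation on the corresponding open subset of $N$ without further work.
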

Therefore, \eqref{HJeq3} will be called the {\it HJ equation for a general Nambu--Poisson manifold} and $\gamma$ satisfying \eqref{HJeq3}
will be a {\it solution of the HJ problem on a general Nambu--Poisson manifold} for $(E,\Lambda,H_1,\dots,H_{n-1})$ of degree $n$ and dimension $m$. 

\begin{definition}
 Given the fibration $\pi:E\rightarrow N$, we say that the bijective map $\Phi:N\times \mathbb{R}\rightarrow E$ is a {\it complete solution of the Hamilton--Jacobi problem}
if
\begin{enumerate}
\item $(\pi\circ \Phi)(x,\lambda)=x$, i.e., $\pi\circ \Phi=\text{pr}_N$.
 \item $\Phi$ is a local diffeomorphism.
\item For all $\lambda \in \mathbb{R}$, we construct a uniparametric family $\Phi_{\lambda}$ such that
\begin{equation*}
\begin{array}{rccl}
\Phi_{\lambda}: &  N &\longrightarrow &  E \\ \noalign{\medskip}
&x& \rightarrow & \Phi_{\lambda}(x)=\Phi(x,\lambda)
\end{array}
\end{equation*}

is a solution of the Hamilton--Jacobi problem.
\end{enumerate}
\end{definition}
\noindent
It can be interpreted geometrically in terms of the following diagram

\[
\xymatrix{
& N\times \mathbb{R} \ar[dd]^{\Phi} \ar[dl]_{\pi\circ \Phi} \\
N \\
&E\ar[ul]^{\pi}
}
\]
Notice that $\Phi_{\lambda}$ is a section.
We now define a function $f=\text{pr}_{\mathbb{R}}\circ \Phi^{-1}$

\[
\xymatrix{
& N\times \mathbb{R} \ar[dd]^{\Phi} \ar[dl]_{\text{pr}_{N}} \ar[dr]^{\text{pr}_{\mathbb{R}}} \\
N & &\mathbb{R}\\
&E\ar[ul]^{\pi} \ar[ur]_{f\circ \Phi_{\lambda}=\text{const.}} 
}
\]
such that $f\circ \Phi_{\lambda}$ is a constant $\left((f\circ \Phi_{\lambda}(x)=f(x,\lambda)=\lambda\right)$
and $d(f\circ \Phi_{\lambda})=0$.

\section{Applications}

\noindent
We consider classical mechanical systems described in terms of  several Hamiltonian functions $H_1,\dots,H_{n-1}$ on the cotangent bundle
$T^{*}\mathbb{R}^n$ of the configuration manifold $E\simeq \mathbb{R}^n$. 
We take canonical bundle coordinates $\{x^1,\dots,x^n\}$ such that
the canonical projection $\pi_{\mathbb{R}^n}(x^1,\dots,x^n)=(x^1,\dots,x^{n-1})$.
In bundle coordinates,
\begin{equation}\label{bivec}
 \Lambda_{\Omega_{\mathbb{R}^n}}=\frac{\partial}{\partial x^1}\wedge \dots \wedge \frac{\partial}{\partial x^n}
\end{equation}
and 
\begin{equation}
 \Omega_{\mathbb{R}^n}=dx^1\wedge \dots \wedge dx^n
\end{equation}
The section $\gamma$ will locally be expressed as 
$$(x^1,\dots,x^{n-1},\gamma^n(x^1,\dots,x^{n-1}))$$
and the projected Hamiltonian vector field is given by
\begin{equation}
X_{H_1,\dots,H_{n-1}}^{\gamma}=\sum_{k=1}^{n-1} \left(\frac{\partial (H_1,\dots,H_{n-1})}{\partial (x^1,\dots,\check{x^k},\dots,x^n)}\circ \gamma\right)\frac{\partial}{\partial x^k}.
\end{equation}

\subsection{The Kummer--Schwarz equation}

\noindent
Let us consider a straightforward application of the former Nambu--Poisson bracket introduced by Nambu \cite{Nambu2,Nambu}, which corresponds with
the case $n=3$.
Consider a third-order
Kummer--Schwarz equation (3KS equation) \cite{Berk,GL} of the form
\begin{equation}\label{KS3}
\frac{d^3x}{dt^3}=\frac 32\left(
\frac{dx}{dt}\right)^{-1}\!\!\left(\frac{d^2x}{dt^2}\right)^{2}\!\!-2c_0\left(\frac{
dx}{dt}\right)^3\!\!+2b_1(t)\frac{dx}{dt},
\end{equation}
where $c_0$ is a real constant and $b_1=b_1(t)$ is any
$t$-dependent function. The physical interest of this equation resides in their relation to the Kummer's problem \cite{Berk} and the Milne--Pinney
and Riccati equations \cite{Ioffe}. In the case in which $c_0=0$, we retrieve the Schwarzian derivative of a function $x(t)$ with respect to $t$ \cite{Leach}.
The 3KS also makes appearances in relation to the $t$-harmonic oscillators in two body problems \cite{Bekov} and Quantum Mechanics \cite{Ioffe}.
Lower dimensional reductions of the 3KS are particular cases of the Gambier equations and similar cosmological models. 

The 3KS equation can be rewritten as a system of first-order differential equations obtained by adding the variables $v\equiv dx/dt$ and $a\equiv d^2x/dt^2$, namely
\begin{equation}\label{firstKS3}
\frac{dx}{dt}=v,\qquad \frac{dv}{dt}=a,\qquad \frac{da}{dt}=\frac 32 \frac{a^2}v-2c_0v^3+2b_1(t)v\,,
\end{equation}	
Its associated to the $t$-dependent vector field reads
\begin{equation}\label{Ex}
X^{3KS}_t=v\frac{\partial}{\partial x}+a\frac{\partial}{\partial v}+\left(\frac 32
\frac{a^2}v-2c_0v^3+2b_1(t)v\right)\frac{\partial}{\partial
a}=Y_3+b_1(t)Y_1,
\end{equation}
where the vector fields on $\mathcal{O}_2=\{(x,v,a)\in{T}^2\mathbb{R}\mid v\neq 0\}$ given by
\begin{align}\label{VFKS1}
Y_1=&2v\frac{\partial}{\partial a},\quad Y_2=v\frac{\partial}{\partial v}+2a\frac{\partial}{\partial a},\nonumber\\
&Y_3=v\frac{\partial}{\partial x}+a\frac{\partial}{\partial v}+\left(\frac 32\frac{a^2}v-2c_0v^3\right)\frac{\partial}{\partial a},
\end{align}
 satisfy
the commutation relations
$$[Y_1,Y_3]=2Y_2,\quad [Y_1,Y_2]=Y_1,\quad [Y_2,Y_3]=Y_3.$$
These vector fields span a three-dimensional Lie algebra of vector fields $V^{3KS}$ isomorphic
to $\mathfrak{sl}(2,\mathbb{R})$ and $X^{3KS}$ becomes a $t$-dependent vector field taking values in $V^{3KS}$. We say that $X^{3KS}$ is a {\it  Lie system},
according to the general theory of Lie systems \cite{CGLS,CLS1,CLS122,LS,Sardon}. The theory of Lie systems has been in vogue of research during the last decades,
achieving beautiful geometric properties of many reputable nonlinear dynamical systems spread in the scientific literature. For example, Lie systems
guarantee the existence of general solutions in terms of nonlinear superposition rules.
% However, $X^{3KS}$ is not a Lie--Hamilton system when $b_1(t)$ is not a constant. Indeed, in this case $\mathcal{D}^{X^{3KS}}$ coincides with $T\mathcal{O}_2$ on $\mathcal{O}_2$. If $X^{3KS}$ were also a Lie--Hamilton system with respect to $(N,\Lambda)$, then $V^{X^{3KS}}$ would consist of Hamiltonian vector fields and the characteristic distribution associated to $\Lambda$ would have odd-dimensional rank on $\mathcal{O}_2$. This is impossible, as the local Hamiltonian vector fields of a Poisson manifold span a generalized distribution of even rank at each point. Our previous argument can easily be generalised to formulate the following `no-go' theorem.
The theory has been widely discussed under different geometric frameworks: symplectic, presymplectic, Poisson setting among others.
In particular, it has been studied from the point of view of an underlying Dirac structure \cite{CGLS}. As a particular case of this Dirac structure, 
we can endow the
manifold $\mathcal{O}_2$ with a presymplectic form $\omega_{3KS}$ in such a way that $V^{X^{3KS}}$ consists of Hamiltonian vector fields with respect to it. Indeed, by considering the equations $
\mathcal{L}_{Y_1}\omega_{3KS}=\mathcal{L}_{Y_2}\omega_{3KS}=\mathcal{L}_{Y_3}\omega_{3KS}=0$ and $d\omega_{3KS}=0$, we can readily find the presymplectic form \cite{CGLS}
\[
\omega_{3KS}=\frac{dv\wedge da}{v^3}
\]
on $\mathcal{O}_2$. Additionally, we see that

\begin{equation}\label{3KSHamFun}
\iota_{Y_1}\omega_{3KS}=d\left(\frac{2}{v}\right),\quad \iota_{Y_2}\omega_{3KS}=d\left(\frac{a}{v^2}\right), \quad \iota_{Y_3}\omega_{3KS}=d\left(\frac{a^2}{2v^3}+2c_0v\right).
\end{equation}

From (\ref{3KSHamFun}), it follows that the vector fields $Y_1$, $Y_2,$ and $Y_3$ have Hamiltonian
functions
\begin{equation}\label{Fun3KS}
h_1=-\frac{2}{v},\qquad h_2=-\frac{a}{v^2},\qquad h_3=-\frac{a^2}{2v^3}-2c_0v,
\end{equation}
respectively.
Moreover, 
$$\{ h_1, h_3 \}= 2 h_2, \quad \{h_1,h_2 \} = h_1, \quad \{ h_2, h_3 \}=h_3,$$
where $\{\cdot,\cdot\}$ is the Poisson bracket on $(\mathcal{O}_2,\omega_{3KS})$  induced by $\omega_{3KS}$. In consequence, $h_1,h_2, $ and $h_3$ span a finite-dimensional real Lie algebra
isomorphic to $\mathfrak{sl}(2,\mathbb{R})$. According to the theory of Lie--Hamilton systems \cite{Sardon}, there exists a t-dependent Hamiltonian function 
\begin{equation}\label{ham1}
h^{3KS}_t = h_3 + b_1(t)h_1.
\end{equation}
Furthermore, these vector fields are Hamiltonian with respect to a second presymplectic form \cite{CGLS}, which 
is obtained in the following way
\[
\omega_{Z_P}=-\frac{2}{v^3}(xdv\wedge da+vda\wedge dx+adx\wedge dv)
\]
by $\mathcal{L}_{Z_P}\omega_{3KS}$, where $Z_P=x^2\partial/\partial x$ is a symmetry of of the 3KS equation.
We obtain another triple of Hamiltonian functions with respect to this presymplectic form
\[
\begin{gathered}
\iota_{Y_1}\omega_{Z_P}=-d({{Z_P}}h_1)=-d\left(\frac{4x}{v}\right),\quad
\iota_{Y_2}\omega_{Z_P}=-d({{Z_P}}h_2)=d\left(2-\frac{2ax}{v^2}\right),\\
\iota_{Y_3}\omega_{Z_P}=-d({{Z_P}}h_3)=d\left(\frac {2a}v-\frac{a^2x}{v^3}\right).
\end{gathered}
\]
So, $Y_1,Y_2$, and $Y_3$ are Hamiltonian vector fields with respect to $\omega_{Z_P}$. Moreover, since
\begin{eqnarray*}\{ {Z_P}h_1, {Z_P}h_2\}_{\omega_{Z_P}}&=& Z_Ph_1\,,\\
\{ {Z_P}h_2, {Z_P}h_3\}_{\omega_{Z_P}}&=& {Z_P}h_3\,,\\
\{ {Z_P}h_1,{Z_P}h_3\}_{\omega_{Z_P}}&=&2{Z_P}h_2\,,
\end{eqnarray*}
we see that $\overline{h}_1=Z_P h_1$, $\overline{h}_2=Z_Ph_2,$ and $\overline{h}_3=Z_Ph_3$ span a new finite-dimensional real Lie algebra. So,
if $h$ is a Hamiltonian on $\mathcal{O}_2$, then $\overline{h}=Z_Ph$ is another Hamiltonian on $\mathcal{O}_2$ taking the equivalent expression. 
So, 
\begin{equation}\label{barh}
 \overline{h}=\overline{h}_3+b_1(t)\overline{h}_1.
\end{equation}

To obtain a Nambu--Poisson structure, we need to find a volume form $\overline{\Omega}$ compatible with our structure $(\mathcal{O}_2,h,\bar{h})$.
It can be computed that in order to retrieve the initial first-order system \eqref{firstKS3} by means of the Nambu--Poisson brackets
\begin{equation}
\dot{x}=\{h,\overline{h},x\},\quad \dot{v}=\{h,\overline{h},v\},\quad \dot{a}=\{h,\overline{h},a\}
\end{equation}
 (recall: with $h,\overline{h}$, associated with the presymplectic forms $\omega_{3KS}$ and $\mathcal{L}_{Z_P}\omega_{3KS}$,)
the canonical volume form $\Omega=dx\wedge dv\wedge da$ has to be conformally transformed into
\begin{equation}\label{volformKS}
 \overline{\Omega}=\frac{i}{v^6}\left(\frac{a^2}{v^2}+4b_1(t)\right)^{3/2}dx\wedge dv\wedge da
\end{equation}
 \noindent
For this particular case, the canonical bracket $\{x,v,a\}=1$ turns out in
\begin{equation}\label{volformKSbrac}
 \{x,v,a\}=\frac{1}{\frac{i}{v^6}\left(\frac{a^2}{v^2}+4b_1(t)\right)^{3/2}}.
\end{equation}
Both expressions are real for values of the function $b_1(t)<-\frac{v^2}{4a^2}$.
%since \eqref{volformKS} is a conformal transformation of the canonical volume form $dx\wedge dv\wedge da$ of the canonical Nambu-Poisson formalism.

It is also possible to find a third compatible presymplectic form by the action of the symmetry vector field on $\mathcal{L}_{Z_P}\omega_{3KS}$
It reads:
\begin{equation}\label{2ndpre}
 \bar{\bar{\omega}}=\mathcal{L}_{Z_P}\omega_{Z_P}=-\frac{4x}{v^2}dx\wedge da+\frac{2x^2}{v^3}dv\wedge da+\left(\frac{4xa}{v^3}+\frac{4}{v}\right)dx\wedge dv
\end{equation}
with corresponding Hamiltonian functions
\begin{align*}
 \iota_{Y_1}\bar{\bar \omega}&=d\left(\frac{4x^2}{v}\right),\quad  \iota_{Y_2}\bar{\bar \omega}=d\left(\frac{2x^2a}{v^2}-4x\right),
  &\iota_{Y_3}\bar{\bar \omega}=d\left(-\frac{4xa}{v}+\frac{x^2a^2}{v^3}+4v\right)
\end{align*}
when $c_0=0$.
\noindent
This presymplectic form \eqref{2ndpre} could have been equivalently used for the explained procedure above. 

If we apply the Hamilton--Jacobi theory to this problem, we have the following setting
\[
\xymatrix{ E=\mathcal{O}_2
\ar[dd]^{\pi} \ar[rrr]^{X_{h^{3KS}\bar{h}}}&   & &TE\ar[dd]^{T\pi}\\
  &  & &\\
N \ar@/^2pc/[uu]^{\gamma}\ar[rrr]^{X_{h^{3KS}\bar{h}}^{\gamma}}&  & & TN}
\]
where $\mathcal{O}_2=\{(x,v,a)\in T^{2}\mathbb{R}| v\neq 0\}$ and $N=\{(x,v)\in T\mathbb{R}| v\neq 0\}$.
The section $\gamma$ is locally given by $\gamma(x,v)=\gamma(x,v,\gamma^{a}(x,v)).$

The vector field $X_{h^{3KS}\bar{h}}$ can be obtained by performing the calculation
\begin{equation}\label{constr}
 X_{h^{3KS}\bar{h}}=\sharp (dh^{3KS}\wedge d\bar{h})
\end{equation}
%where $dh^{3KS}$ and $\bar{h}$ correspond with \eqref{ham1} and \eqref{barh}, respectively.
Let refer us to their total derivatives with the following notation
\begin{equation}
 dh^{3KS}=h^{3KS}_xdx+h^{3KS}_vdv+h^{3KS}_ada
\end{equation}
where 
\begin{equation*}
h^{3KS}_x=\frac{\partial h^{3KS}}{\partial x}, \quad h^{3KS}_v=\frac{\partial h^{3KS}}{\partial v},\quad h^{3KS}_a=\frac{\partial h^{3KS}}{\partial a}
\end{equation*}
And for
\begin{equation}
 d\bar{h}=\bar{h}_xdx+\bar{h}_vdv+\bar{h}_ada
\end{equation}
where
\begin{equation*}
\bar{h}_x=\frac{\partial \bar{h}}{\partial x}, \quad \bar{h}_v=\frac{\partial \bar{h}}{\partial v},\quad \bar{h}_a=\frac{\partial \bar{h}}{\partial a}.
\end{equation*}
In this way, if we construct the vector field \eqref{constr}
{\begin{footnotesize}
\begin{equation}\label{cX}
 X_{h^{3KS}\bar{h}}=\left(\frac{\partial h^{3KS}}{\partial v}\frac{\partial \bar{h}}{\partial a}-\frac{\partial h^{3KS}}{\partial a}\frac{\partial \bar{h}}{\partial v}\right)\frac{\partial}{\partial x}+\frac{\partial h^{3KS}}{\partial a}\frac{\partial \bar{h}}{\partial x}\frac{\partial}{\partial v}
-\frac{\partial h^{3KS}}{\partial v}\frac{\partial \bar{h}}{\partial x}\frac{\partial}{\partial a}
\end{equation}
\end{footnotesize}}
and
{\begin{footnotesize}
\begin{equation}\label{pcX}
 \text{T}_{\gamma}X_{h^{3KS}\bar{h}}^{\gamma}=\left(\frac{\partial h^{3KS}}{\partial v}\frac{\partial \bar{h}}{\partial a}-\frac{\partial h^{3KS}}{\partial a}\frac{\partial \bar{h}}{\partial v}\right)\left(\frac{\partial }{\partial x}+\frac{\partial \gamma}{\partial x}\frac{\partial}{\partial a}\right)
+\frac{\partial h^{3KS}}{\partial a}\frac{\partial \bar{h}}{\partial x}\left(\frac{\partial}{\partial v}+\frac{\partial \gamma}{\partial v}\frac{\partial}{\partial a}\right)
\end{equation}
\end{footnotesize}}

Henceforth, denote by $\gamma_x=\frac{\partial \gamma}{\partial x}$ and $\gamma_v=\frac{\partial \gamma}{\partial v}$. By direct comparison of \eqref{cX} and \eqref{pcX},
according to \eqref{defvf}, we obtain the following equation
\begin{equation}
\left(\frac{\partial h^{3KS}}{\partial v}\frac{\partial \bar{h}}{\partial a}-\frac{\partial h^{3KS}}{\partial a}\frac{\partial \bar{h}}{\partial v}\right)\gamma_x+ \frac{\partial h^{3KS}}{\partial a}\frac{\partial \bar{h}}{\partial x}\gamma_v+\frac{\partial h^{3KS}}{\partial v}\frac{\partial \bar{h}}{\partial x}=0
\end{equation}
which corresponds with the Hamilton--Jacobi equation \eqref{echamj}.
If we try to solve this equation, we can find a particular solution
\begin{equation}
 \gamma^{a}=2a-\int{\frac{d\bar{h}}{\frac{\partial \bar{h}}{\partial a}}}.
\end{equation}

\subsection{N-coupled first-order Riccati equations}
The first-order Riccati equations can appear as a system of equations containing $n$ copies of same Ricatti equation \cite{Sardon}.
This system is
\begin{equation}\label{nric}
\dot{x}^{i_k}=a_0(t)+a_1(t)x^{i_k}+a_2(t){x^{i_k}}^2,\qquad i_k=1,\dots,n
\end{equation}
defined on $\mathcal{O}=\{(x_1,\dots,x_n)| (x_1-x_2) \dots  (x_{n-1}-x_n)\neq 0 \subset \mathbb{R}^n\}.$

The associated $t$-dependent vector field with this system is 
$$X_t=\sum_{i_k=1}^n \left(a_0(t)+a_1(t)x^{i_k}+a_2(t){x^{i_k}}^2\right)\frac{\partial}{\partial x^{i_k}}.$$

A method to obtain $n-1$ presymplectic forms from which to derive Hamiltonians, is the permutation of indices. By fixing one
of the coordinates, let us say $l$, with $l=1,\dots,n-1$
\begin{equation}
 \omega^{[l]}=\sum_{k<l}^{l-1}\frac{dx^k\wedge dx^{l}}{(x^k-x^{l})^2}+\sum_{k>l}^n\frac{dx^{l}\wedge dx^{k}}{(x^{l}-x^{k})^2},
\end{equation}
for a fixed $l$.

Due to its own construction, these are closed forms $d\omega^{[l]}=0$.
Equivalently, we can derive Hamiltonian functions associated with each $\omega^{[l]}$
\begin{multline*}
 h^{[l]}=a_0(t)\left(\sum_{k>l}^n\frac{1}{x^{l}-x^k}+\sum_{k<l}^{l-1}\frac{1}{x^k-x^{l}}\right)\\
+\frac{a_1(t)}{2}\left(\sum_{k>l}^n\frac{x^{l}+x^k}{x^{l}-x^k}+\sum_{k<l}^{l-1}\frac{x^{l}+x^k}{x^k-x^{l}}\right)\\
+a_2(t)\left(\sum_{k>l}^n\frac{x^{l}x^k}{x^{l}-x^k}+\sum_{k<l}^{l-1}\frac{x^{l}x^k}{x^k-x^{l}}\right)
\end{multline*}
for every fixed $l=1,\dots,n-1$.
According to the Nambu--Poisson theory, equations \eqref{nric} must be retrived through the computation
\begin{equation}\label{jacn}
 \dot{x}^{i_k}=\{h^{[1]},\dots,h^{[n-1]},x^{i_k}\},\quad i_k=1,\dots,n.
\end{equation}
and the $n$-dimensional bracket \eqref{defjacobian} takes the form
\begin{equation*}
\dot{x}^{i_k}=\{h^{[1]},h^{[2]},\dots,h^{[n-1]},x^{i_k}\}=
\left|
\begin{array}{cccc}
\frac{\partial h^{[1]}}{\partial x^1} & \frac{\partial h^{[1]}}{\partial x^2}& \dots & \frac{\partial h^{[1]}}{\partial x^n}\\
\dots & \dots & \dots & \dots\\
\frac{\partial h^{[k]}}{\partial x^1} & \frac{\partial h^{[k]}}{\partial x^2}& \dots & \frac{\partial h^{[k]}}{\partial x^n}\\
\dots & \dots & \dots & \dots\\
\frac{\partial h^{[n-1]}}{\partial x^{1}} & \frac{\partial h^{[n-1]}}{\partial x^2}& \dots & \frac{\partial h^{[n-1]}}{\partial x^n}\\
\frac{\partial x^{i_k}}{\partial x^1} & \frac{\partial x^{i_k}}{\partial x^2}& \dots & \frac{\partial x^{i_k}}{\partial x^n}\\
\end{array}
\right|
\end{equation*}
such that if we compute the determinant, we obtain
\begin{equation}\label{nricdet}
 \dot{x}^{i_k}=(-1)^{i_k+n}\sum_{\sigma_{i_1,\dots,i_n}} (-1)^{\left(\frac{n(n-1)}{2}+i_1+\dots+i_n\right)}\frac{\partial h^{[1]}}{\partial x^{i_1}}\dots \frac{\partial h^{[n-1]}}{\partial x^{i_n}},
\end{equation}
when $i_1,\dots,i_{n-1}=1,\dots,n\neq i_k$ and a particular $i_k$ that takes any value $1,\dots,n.$

The factor $\frac{\partial h^{[l]}}{\partial x^{j}}$ in \eqref{nricdet} takes the form

{\begin{footnotesize}
\begin{align*}
& \frac{\partial h^{[l]}}{\partial x^{j}}=a_0(t)\left(\delta_{j}^{[l]}\left(\sum_{l>k}^{l-1}\frac{1}{(x^k-x^l)^2}-\sum_{k>l}^n\frac{1}{(x^l-x^k)^2}\right)+{\bar{\delta}}_{j}^{[l]}\left(\sum_{k>l}^n\frac{1}{(x^{l}-x^k)^2}-\sum_{k<l}^{l-1}\frac{1}{(x^k-x^{l})^2}\right)\right)\\
+&a_1(t)\left(\delta_{j}^{[l]}\left(\sum_{l>k}^{l-1}\frac{x^k}{(x_k-x_l)^2}-\sum_{k>l}^n\frac{x^k}{(x_l-x_k)^2}\right)+{\bar{\delta}}_{j}^{[l]}\left(\sum_{k>l}^n\frac{x^l}{(x^{l}-x^k)^2}-\sum_{k<l}^{l-1}\frac{x^l}{(x^k-x^{l})^2}\right)\right)\\
+&a_2(t)\left(\delta_{j}^{[l]}\left(\sum_{l>k}^{l-1}\frac{(x^k)^2}{(x^k-x^l)^2}-\sum_{k>l}^n\frac{(x^k)^2}{(x^l-x^k)^2}\right)+{\bar{\delta}}_{j}^{[l]}\left(\sum_{k>l}^n\frac{(x^l)^2}{(x_{l}-x^k)^2}-\sum_{k<l}^{l-1}\frac{(x^l)^2}{(x^k-x^{l})^2}\right)\right)\\
\end{align*}
\end{footnotesize}}
where $l=1,\dots,n-1$ and $j=1,\dots,n.$

By comparison between \eqref{nric} and \eqref{nricdet}, we need to conformally transform the canonical volume form $\Omega=dx^1\wedge \dots \wedge dx^n$
associated with the former problem \eqref{nric} into another volume form $\bar{\Omega}$ corresponding with \eqref{nricdet}. 

There exists a change of coordinates $\overline{x}^{\hat j}=f_{\hat j}(x^j)x^j$ for all $\hat{j}=1,\dots,n$ through which we derive a compatible
$\overline{\Omega}$ compatible with \eqref{nricdet} that maps \eqref{nricdet} into \eqref{nric}. It takes the following form
\begin{equation}
 \overline{\Omega}=\prod_{\hat{j}=1,\dots,n} f_{\hat j}dx^1\wedge \dots \wedge dx^n
\end{equation}
with 
{\begin{footnotesize}
\begin{equation*}
 f_{\hat j}=\delta_{\hat j}^{[l]}\left(\sum_{l>k}^{l-1}\frac{1}{(x^k-x^l)^2}-\sum_{k>l}^n\frac{1}{(x^l-x^k)^2}\right)+{\bar{\delta}}_{\hat j}^{[l]}\left(\sum_{k>l}^n\frac{1}{(x^{l}-x^k)^2}-\sum_{k<l}^{l-1}\frac{1}{(x^k-x^{l})^2}\right)
\end{equation*}
\end{footnotesize}}
Therefore, the canonical $n$-dimensional Nambu--Poisson bracket takes the expression
\begin{equation}
 \{x^1,\dots,x^n\}=\frac{1}{\prod_{\hat{j}=1,\dots,n} f_{\hat j}}.
\end{equation}

If we want to apply the Hamilton--Jacobi theory to this example, we have the diagram
\[
\xymatrix{ E=\mathcal{O}
\ar[dd]^{\pi} \ar[rrr]^{X_{h^{1}\dots h^{n-1}}}&   & &TE\ar[dd]^{T\pi}\\
  &  & &\\
N \ar@/^2pc/[uu]^{\gamma}\ar[rrr]^{X_{h^{1}\dots h^{n-1}}^{\gamma}}&  & & TN}
\]
where $\mathcal{O}=\{(x_1,\dots,x_n)| (x_1-x_2) \dots  (x_{n-1}-x_n)\neq 0 \subset \mathbb{R}^n\}$ and $N=\{(x_1,\dots,x_n)| (x_1-x_2) \dots  (x_{n-2}-x_{n-1})\neq 0 \subset \mathbb{R}^{n-1}\}$.

The vector field $X_{h^{1}\dots h^{n-1}}$ can be obtained by performing the calculation
\begin{equation}
 X_{h^{1}\dots h^{n-1}}=\sharp (dh^{1}\wedge\dots \wedge dh^{n-1})
\end{equation}
In this way, it takes the expression
{\begin{footnotesize}
\begin{align*}
 X_{h^{1}\dots h^{n-1}}=&\sharp \left(\sum_{\sigma_{i_1\dots i_n}}(-1)^{\frac{n(n-1)}{2}+1_1+\dots+i_n}\frac{\partial h^{[1]}}{\partial x^{i_1}}\dots \frac{\partial h^{[n-1]}}{\partial x^{i_{n}}}\frac{dx^{i_1}\wedge \dots \wedge dx^{i_n}}{\left(\prod f^{i_1\dots i_n}\right)_{i_1,\dots,i_n}}\right)\\
&\sum_{\sigma_{i_1\dots i_n}}(-1)^{\frac{n(n-1)}{2}+1_1+\dots+i_n}\frac{\partial h^{[1]}}{\partial x^{i_1}}\dots \frac{\partial h^{[n-1]}}{\partial x^{i_{n}}}\frac{\delta^{i_1}_1 \dots \delta^{i_n}_{n-1}}{\left(\prod f^{i_1\dots i_n}\right)_{i_1,\dots,i_n}}\frac{\partial}{\partial x^{i_k}}
\end{align*}

\end{footnotesize}}
where $i_1,\dots,i_n\neq i_k=1,\dots,n.$

On the other hand, the vector field
{\begin{footnotesize}
\begin{align*}
\text{T}_{\gamma}& \left(X_{h^{1}\dots h^{n-1}}^{\gamma}\right)=\sharp \left(\sum_{\sigma_{i_1\dots i_n}}(-1)^{\frac{n(n-1)}{2}+1_1+\dots+i_n}\frac{\partial h^{[1]}}{\partial x^{i_1}}\dots \frac{\partial h^{[n-1]}}{\partial x^{i_{n}}}\frac{dx^{i_1}\wedge \dots \wedge dx^{i_n}}{\left(\prod f^{i_1\dots i_n}\right)_{i_1,\dots,i_n}}\right)\\
&\sum_{\sigma_{i_1\dots i_n}}(-1)^{\frac{n(n-1)}{2}+1_1+\dots+i_n}\frac{\partial h^{[1]}}{\partial x^{i_1}}\dots \frac{\partial h^{[n-1]}}{\partial x^{i_{n}}}\frac{\delta^{i_1}_1 \dots \delta^{i_n}_{n-1}}{\left(\prod f^{i_1\dots i_n}\right)_{i_1,\dots,i_n}}\left(\frac{\partial}{\partial x^{i_k}}+\frac{\partial \gamma^n}{\partial x^{i_k}}\frac{\partial}{\partial x^n}\right)
\end{align*}

\end{footnotesize}}
where $i_1,\dots,i_n\neq i_k$ and $i_k=1,\dots,n-1$ whilst $i_1,\dots,i_n=1,\dots,n.$ And we have chosen $\gamma$ in such a way as
$\gamma(x^1,\dots,x^{n-1},\gamma^{n}(x^1,\dots,x^{n-1}))$.

So, the Hamilton--Jacobi equation for this case reads
\begin{align*}
 \sum_{i_k}^{n-1} &\sum_{\sigma_{i_1\dots i_n}}^{1\leq i_j\leq n-1}(-1)^{\frac{n(n-1)}{2}+i_1+\dots+i_n}\frac{\partial h^{[1]}}{\partial x^{i_1}}\dots \frac{\partial h^{[n-1]}}{\partial x^{i_{n}}}\frac{\delta^{i_1}_1 \dots \delta^{i_n}_{n-1}}{\left(\prod f^{i_1\dots i_n}\right)_{i_1,\dots,i_n}}\gamma^n_{i_k}\\
&+\sum_{\sigma_{i_1\dots i_n}}^{1\leq i_j\leq n}(-1)^{\frac{n(n-1)}{2}+i_1+\dots+i_n}\frac{\partial h^{[1]}}{\partial x^{i_1}}\dots \frac{\partial h^{[n-1]}}{\partial x^{i_{n}}}\frac{\delta^{i_1}_1 \dots \delta^{i_n}_{n-1}}{\left(\prod f^{i_1\dots i_n}\right)_{i_1,\dots,i_n}}=0,
\end{align*}
where $\gamma^n_{i_k}$ means $\frac{\partial \gamma^n}{\partial x_{i_k}}.$

\section{Conclusions}
In this paper we have developed a Hamilton--Jacobi theory for Nambu--poisson systems, extending the classical approach for Hamiltonian systems.
We apply the theoretical results to two examples. One is the third-order Kummer--Schwarz equation and the other is the $n$-coupled first-order
Riccati equations.
To do this, we previously study the properties that characterize the Lagrangian submanifolds for Nambu--structures.

\section*{Acknowledgements}
This work has been partially supported by MINECO MTM 2013-42-870-P and
the ICMAT Severo Ochoa project SEV-2011-0087.

\end{document}